\documentclass[lettersize,journal]{IEEEtran}
\usepackage{amsmath,amsfonts}
\usepackage{algorithmic}
\usepackage{algorithm}
\usepackage{array}
\usepackage[caption=false,font=normalsize,labelfont=sf,textfont=sf]{subfig}
\usepackage{textcomp}
\usepackage{stfloats}
\usepackage{url}
\usepackage{verbatim}
\usepackage{graphicx}
\usepackage{cite}

\usepackage{amsthm}

\newcommand{\nop}[1]{}

\newtheorem{theorem}{Theorem}
\newtheorem{corollary}{Corollary}

\newtheorem{proposition}{Proposition}

\begin{document}

\title{Beyond Virtual Delay: Improving Packet Delay Bound in Network Calculus}

\author{Yuming Jiang \\Norwegian University of Science and Technology, Trondheim, Norway \\yuming.jiang@ntnu.no 
\thanks{This work has been submitted to the IEEE for possible publication. Copyright may be transferred without notice, after which this version may no longer be accessible.}
}

\maketitle

\begin{abstract}
In network calculus, a fundamental result is the classical delay bound given by the horizontal deviation between the arrival and service curves. While widely used, the classical bound is derived from the notion of virtual delay. 
For a FIFO system, in this work, we first show that the maximum packet delay is always upper-bounded by the maximum virtual delay, revealing inherent conservatism when applying the virtual-delay-based bound to packet delay. Motivated by this insight, we revisit packet delay analysis and derive a new packet delay bound that requires no assumptions beyond the arrival and service curves. Specializing the new bound to a system with leaky-bucket arrival curve and rate-latency service curve shows strict improvement over the classical bound, which is further demonstrated through a case study in time-sensitive networking (TSN). 
\end{abstract}

\begin{IEEEkeywords}
Network Calculus, arrival curve, service curve, virtual delay, packet delay, delay bound. %, time-sensitive networking. 
\end{IEEEkeywords}

\section{Introduction}

Network calculus provides a powerful framework for analyzing worst-case performance guarantees in communication networks \cite{Chang00} \cite{NetCal}\cite{DNC}. By modeling traffic arrivals through arrival curves and system capabilities through service curves, network calculus enables the derivation of bounds on key performance metrics such as backlog and delay. A fundamental result is the classical delay bound derived from the notion of virtual delay and given by the horizontal deviation between the arrival and service curves \cite{Chang00} \cite{NetCal}\cite{DNC}. 
If the only available information is the description of the arrival and service curves, the bound is tight for virtual delay \cite{NetCal}. 
If the system is FIFO, the delay of a packet can be connected to the virtual delay at the time of its arrival. This not only justifies using virtual delay as a proxy for packet delay but also the classical bound 
as a bound on packet delay, e.g. in \cite{Zhao22} and references therein. 

In this work, however, we prove that for a FIFO system, the maximum packet delay is inherently upper-bounded by the maximum virtual delay. This implies that the classical bound can be conservative when applied to packet delay. In addition, it suggests that tighter bounds may be obtained by directly working on packet delay. 
This idea has indeed already been exploited in the literature. 
Specifically, the work \cite{Ehsan19} shows that for a service curve network element, if its transmission rate is known, an improved delay bound can be derived. In \cite{Ehsan23}, it is shown that if the input traffic is also known to have a packet-level arrival curve, the bound can be further improved.  

In this work, we also conduct analysis directly on packet delay without the intermediate use of virtual delay. 
However, unlike \cite{Ehsan19} \cite{Ehsan23}, we investigate whether an improved packet delay bound can be obtained without introducing additional assumptions. Specifically, we ask whether a tighter packet delay bound can be derived using only the description of arrival and service curves so as to preserve generality and abstraction of the bound, like the classical virtual-delay-based bound. We answer this question in the affirmative. The new packet delay bound is exemplified to be strictly tighter than the classical bound in practical systems, which is also illustrated through a case study in time-sensitive networking (TSN). 

In the next section, the system model and the necessary background on network calculus are introduced. Then in Section \ref{sec-3} we prove the relationship between the maximum virtual delay and the maximum packet delay. The new bound is derived and exemplified for a system with leaky-bucket arrival curve and rate-latency service curve in Section \ref{sec-4}. In Section \ref{sec-5}, a TSN case study is conducted to quantitatively compare the bounds. Finally, the conclusion is made in Section \ref{sec-6}.

\section{System Model and the Classical Delay Bound}
\subsection{The System} 
We consider a FIFO system in a packet-switched network. The system may be a single network element or a network of elements. Its input process is a sequence of packets. The input process may consist of multiple flows. 
Packets may be queued within the system but are not dropped. By convention, a packet is considered to have arrived to (respectively departed from) the system if and only if its last bit has arrived (respectively departed). 
For each packet $n = 1, 2, 3, \dots$, let $a(n)$, $d(n)$ and $l(n)$  denote its arrival time, departure time and length (in bits), respectively. 
In addition, let $A(t)$ represent the cumulative input (in bits) entering the system and $A^{*}(t)$ the cumulative output (in bits) from the system, up to time $t$ (excluded) \cite{NetCal}. 

The input has arrival curve $\alpha$ and the system provides to the input service curve $\beta$. By arrival curve and service curve definitions \cite{NetCal}, the input satisfies 
$$
A(t) - A(s) \le \alpha(t-s), \quad \forall 0 \le  s \le t
$$
and the service ensures 
$$
A^{*}(t) \ge A\otimes \beta(t) \equiv \inf_{0 \le s \le t}\{A(s) + \beta(t-s)\}, \quad \forall t \ge 0   
$$
where $\alpha$ and $\beta$ are non-negative wide-sense increasing functions with $\beta(0)=0$, and $\otimes$ is the min-plus convolution operation.

\subsection{Packet Delay, Virtual Delay, and the Classical Bound}

The delay $D(n)$ of packet $n (\ge 1)$ in the system is  
\begin{equation}
D(n) = d(n) -a(n).  
\end{equation}
The virtual delay $\tilde{D}(t)$ at time $t (\ge 0)$ is defined as \cite{NetCal} 
\begin{equation}
\tilde{D}(t) = \inf\{\tau \ge0: A(t) \le A^{*}(t+\tau)\}.
\end{equation}

A fundamental result in network calculus is that the virtual delay $\tilde{D}(t)$ at any time $t$ is upper-bounded by the horizontal deviation between the arrival curve $\alpha$ and the service curve $\beta$ \cite{NetCal}, which is denoted as $h(\alpha, \beta)$ and referred to as {\em the classical delay bound} in this work, i.e., $\forall t \ge 0$, 
\begin{equation}
\tilde{D}(t) \le h(\alpha, \beta) \equiv \sup_{t \ge 0}\inf\{\tau: \alpha(t) \le \beta(t+\tau)\}.
\end{equation}

If $\alpha$ and $\beta$ are the only available information and $\alpha$ is sub-additive with $\alpha(0)=0$, {\em the bound on virtual delay is tight} (Theorem 1.4.4, \cite{NetCal}), i.e. a system with arrival curve $\alpha$ and service curve $\beta$ can be constructed where the bound can be reached by virtual delay.

\section{Relationship between Maximum Packet Delay and Maximum Virtual Delay}\label{sec-3}

Consider the maximum packet delay $D^{max}$ and the maximum virtual delay $\tilde{D}^{max}$, which are, respectively, 
\begin{eqnarray}
    D^{max} &=& \max_{n \ge 1}D(n)  \label{eq-dmax}\\ 
    \tilde{D}^{max} &=& \sup_{t\ge 0} \tilde{D}(t) \label{eq-vdmax} 
\end{eqnarray}

Theorem \ref{th-0} reveals a relationship between the two quantities. In its proof as well as in the rest of the paper, the lower and upper pseudo-inverse functions,  $f^{\downarrow}$ and $f^{\uparrow}$, of a wide-sense increasing function $f$, 
are respectively defined as \cite{Chang00}\cite{JL17}:  
\begin{eqnarray}
f^{\downarrow}(y) &\equiv& \inf \{x\ge 0: f(x) \ge y\} \nonumber\\
f^{\uparrow}(y) &\equiv& \inf \{x\ge 0: f(x) > y\} \nonumber
\end{eqnarray} 
In \cite{JL17}, a collection of their properties can be found, e.g., 
\begin{enumerate}
    \item[(P1)] both $f^{\downarrow}$ and $f^{\uparrow}$ are wide-sense increasing; 
    \item[(P2)] if $f(x) \ge y$, $f^{\downarrow}(y) \le x$;
    \item[(P3)] if $f(x) \le y$, $f^{\uparrow}(y) \ge x$
\end{enumerate}

\begin{theorem}\label{th-0}
    For a FIFO system, if the virtual delay is bounded, i.e. $\tilde{D}(t)< \infty, \forall t \ge 0$,
    the maximum packet delay is upper-bounded by the maximum virtual delay, i.e.,   
$$
D^{max} \le \tilde{D}^{max} . 
$$
\end{theorem}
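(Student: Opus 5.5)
The plan is to show that for every packet $n$ there is a time $t$ with $D(n) \le \tilde{D}(t)$; taking the supremum over $n$ then gives $D^{max} \le \tilde{D}^{max}$. The natural candidate is a time at which packet $n$ has completely arrived in the cumulative sense but has not yet been counted in the output. Since $A(t)$ counts bits up to time $t$ excluded, and a packet arrives when its last bit arrives, we have $A(a(n)) < \sum_{k \le n} l(k)$ while $A(t) \ge \sum_{k\le n} l(k)$ for all $t > a(n)$. Ties in arrival times are handled by FIFO order, with packet $n$ counted after those ahead of it. So I will take $t = a(n)^+$, or more carefully $t = a(n)+\epsilon$ with $\epsilon \downarrow 0$.

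First, I formalize FIFO in terms of cumulative functions. Under FIFO, the bits of packet $n$ leave after all bits of earlier packets, and the packet is considered departed only when its last bit leaves. Hence $A^{*}(s) < \sum_{k\le n} l(k)$ for all $s \le d(n)$, because $A^{*}$ excludes time $s$ itself. Second, fix $\epsilon>0$ and set $t=a(n)+\epsilon$. Then $A(t) \ge \sum_{k\le n} l(k)$, so any $\tau$ with $A(t)\le A^{*}(t+\tau)$ must satisfy $A^{*}(t+\tau) \ge \sum_{k\le n} l(k)$. This forces $t+\tau > d(n)$, that is, $\tau > d(n)-a(n)-\epsilon$. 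Taking the infimum gives $\tilde{D}(a(n)+\epsilon) \ge D(n)-\epsilon$. The hypothesis $\tilde{D}(t)<\infty$ ensures the defining set is nonempty, so the infimum is a genuine value and not an empty-set artifact. Third, $\tilde{D}^{max} \ge \tilde{D}(a(n)+\epsilon) \ge D(n)-\epsilon$ for every $\epsilon>0$, so $\tilde{D}^{max}\ge D(n)$. Taking the maximum over $n$ gives the claim. The argument could also be phrased through the pseudo-inverses: $\tilde{D}(t) = A^{*\downarrow}(A(t)) - t$ up to conventions, with (P2) and (P3) used to convert between the cumulative-function inequalities and the time inequalities.

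The main obstacle is the bookkeeping at the boundaries: the left-continuity convention that $A(t)$ and $A^{*}(t)$ exclude time $t$, simultaneous arrivals, and the exact meaning of departure for a packet in FIFO order. These must be set up so that the strict inequalities $A^{*}(s) < \sum_{k\le n} l(k)$ for $s\le d(n)$ and $A(a(n)+\epsilon)\ge \sum_{k\le n} l(k)$ both hold. I would state these two facts explicitly as consequences of the system model, and then use the $\epsilon$-limit to avoid any reliance on continuity of $\tilde{D}$.
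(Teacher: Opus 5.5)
Your proof is correct and follows essentially the paper's argument: evaluate at $a(n)+\epsilon$, use $A(a(n)+\epsilon)\ge\sum_{k\le n}l(k)$ together with the FIFO fact that $A^{*}$ cannot reach this level by time $d(n)$, then let $\epsilon\to 0$. The only difference is that you lower-bound $\tilde{D}(a(n)+\epsilon)$ directly from its definition, whereas the paper substitutes $\tilde{D}^{max}$ into $A(t)\le A^{*}(t+\tilde{D}^{max})$ and applies (P2). Your route is slightly cleaner, because it does not need the infimum defining $\tilde{D}$ to be attained, and it does not even need the finiteness hypothesis, since an empty set gives $\tilde{D}=+\infty$ and the bound holds trivially.
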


\begin{proof}
Consider any packet $n (\ge 1)$. Define $a(n)_{+}$ as the right limit of its arrival time $a(n)$, i.e. $a(n)_{+} = a(n)+\epsilon, \epsilon \to 0$.  Clearly, $A(a(n)_{+}) \ge \sum_{k=1}^{n}l(k)$, because at $a(n)$, there may be other packets arriving at the same time. In addition the definition of $\tilde{D}^{max}$ tells $A(a(n)_{+}) \le A^{*}(a(n)_{+}+\tilde{D}^{max})$, so
$$
A^{*}(a(n)_{+}+\tilde{D}^{max}) \ge \sum_{k=1}^{n}l(k)
$$
From property (P2) of the lower pseudo-inverse, we obtain 
$$
(A^{*})^{\downarrow}(\sum_{k=1}^{n}l(k)) \le a(n)_{+}+\tilde{D}^{max}
$$

Note that, we must also have $(A^{*})^{\downarrow}(\sum_{k=1}^{n}l(k)) > d(n)$ because $A^{*}(t)$ reaches $\sum_{k=1}^{n}l(k)$ only after packet $n$ has finished its service due to FIFO. Consequently, we have 
$
a(n)_{+}+\tilde{D}^{max} > d(n) 
$
or by letting $\epsilon \to 0$, $\tilde{D}^{max}\ge d(n) - a(n)$. Since it  holds for all $n$, $\tilde{D}^{max}\ge \max_{n\ge 1}\{d(n) - a(n)\}=D^{max}$. 
\end{proof}

With Theorem \ref{th-0}, Corollary \ref{th0-c1} is immediate. 

\begin{corollary}\label{th0-c1}
    An upper bound on the maximum virtual delay is also an upper bound on the maximum packet delay.
\end{corollary}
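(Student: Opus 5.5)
The plan is a direct chaining of Theorem \ref{th-0} with the definition of an upper bound. Let $B$ be any upper bound on the maximum virtual delay, i.e., $\tilde{D}(t) \le B$ for all $t \ge 0$, so that $\tilde{D}^{max} \le B$. I will show that $B \ge D^{max}$.

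Two cases are possible.

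\emph{Case $B < \infty$.} Then $\tilde{D}(t) \le B < \infty$ for every $t \ge 0$. This is exactly the hypothesis of Theorem \ref{th-0}, so the theorem applies and yields $D^{max} \le \tilde{D}^{max} \le B$.

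\emph{Case $B = \infty$.} The claim $D^{max} \le B$ holds trivially.

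Either way, $B$ bounds every packet delay $D(n)$. The FIFO assumption is inherited from the setting of Theorem \ref{th-0}.

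The only step that needs any care is confirming that a finite bound on $\tilde{D}^{max}$ implies $\tilde{D}(t) < \infty$ at every $t$, and not merely finiteness of some supremum. This follows because $\tilde{D}(t) \le \tilde{D}^{max}$ pointwise by \eqref{eq-vdmax}. There is no real obstacle beyond this.

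As a remark I would add after the proof, the main instance is $B = h(\alpha,\beta)$. The classical result gives $\tilde{D}(t) \le h(\alpha,\beta)$ for all $t$, so the corollary recovers the standard use of $h(\alpha,\beta)$ as a packet delay bound in FIFO systems.
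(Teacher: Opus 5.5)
Your proposal is correct and follows the paper's route: the paper simply declares the corollary immediate from Theorem~\ref{th-0}, and you chain $D^{max} \le \tilde{D}^{max} \le B$ in the same way. Your extra steps are welcome additions of rigor: you check that a finite $B$ gives $\tilde{D}(t) < \infty$ at every $t$, so the theorem's hypothesis holds, and you treat $B = \infty$ as a separate trivial case.
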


Since the horizontal deviation is time-independent, we have $\tilde{D}^{max} \le h(\alpha, \beta)$,  
which, together with Corollary \ref{th0-c1} and $D^{max} \ge D(n), \forall n \ge 1$ from (\ref{eq-dmax}), leads to Corollary \ref{th0-c2}.

\begin{corollary}\label{th0-c2}
The delay $D(n)$ of any packet $n (\ge 1)$ and the maximum packet delay $D^{max}$ are both upper-bounded by the classical delay bound, i.e. the horizontal deviation $h(\alpha, \beta)$.  
\end{corollary}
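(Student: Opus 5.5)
The plan is to chain the time-uniform classical bound with Corollary~\ref{th0-c1}, and then pass from the maximum to individual packets. The first step is to establish $\tilde{D}^{max} \le h(\alpha,\beta)$. The classical result states that $\tilde{D}(t) \le h(\alpha,\beta)$ for every $t \ge 0$. The right-hand side does not depend on $t$, so taking the supremum over $t$ in (\ref{eq-vdmax}) gives the inequality directly.

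The second step is to check the hypothesis of Theorem~\ref{th-0}. If $h(\alpha,\beta) = \infty$, the corollary is trivial, since every quantity is bounded by $+\infty$. Otherwise $\tilde{D}(t) \le h(\alpha,\beta) < \infty$ for all $t$, so the virtual delay is bounded as Theorem~\ref{th-0} requires. Corollary~\ref{th0-c1}, applied with the upper bound $h(\alpha,\beta)$ on $\tilde{D}^{max}$, then gives $D^{max} \le \tilde{D}^{max} \le h(\alpha,\beta)$.

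The final step handles individual packets. By (\ref{eq-dmax}), $D(n) \le D^{max}$ for every $n \ge 1$, so $D(n) \le h(\alpha,\beta)$ as well. If $D^{max}$ is only a supremum rather than an attained maximum, the same inequality still holds packet by packet. This is because the proof of Theorem~\ref{th-0} actually shows $d(n)-a(n) \le \tilde{D}^{max}$ for each $n$ individually.

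The only potential obstacle is the finiteness hypothesis of Theorem~\ref{th-0}. It is handled by the case split on whether $h(\alpha,\beta)$ is finite, as described above. Everything else is direct substitution into results already established.
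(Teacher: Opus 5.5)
Your proposal is correct and follows the paper's own argument: $\tilde{D}^{max} \le h(\alpha,\beta)$ because the bound does not depend on $t$, then Corollary~\ref{th0-c1} gives $D^{max} \le h(\alpha,\beta)$, and then $D(n) \le D^{max}$. Your case split on whether $h(\alpha,\beta)$ is finite is a small but welcome addition, since it checks the boundedness hypothesis of Theorem~\ref{th-0}, which the paper's one-line justification leaves implicit.
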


\section{Improving Packet Delay Bound}\label{sec-4}

Theorem \ref{th-0} implies that the classical virtual-delay bound may overestimate when applied to packet delay. This gap motivates direct analysis at the packet-level to reduce conservatism. 

\subsection{A New Packet Delay Bound}

Theorem \ref{th-1} is from direct analysis at the packet-level, where a new bound is introduced without additional assumptions other than the arrival and service curves. Its potential improvement over the classical bound is demonstrated in Sec. \ref{sec-4-2}. 

\begin{theorem}\label{th-1}
For a FIFO system, if the input has arrival curve $\alpha$ and the system provides to the input service curve $\beta$, then, for any packet $n \ge 1$, its delay $D(n)$ satisfies   
\begin{equation}\label{map-db3}
D(n) \le \sup_{v \ge 0}\{\beta^{\uparrow}(v) - \alpha^{\downarrow}(v+l(n))\}. 
\end{equation}
\end{theorem}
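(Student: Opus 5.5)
Fix a packet $n$ and set $L_n=\sum_{k=1}^{n}l(k)$. The plan is to work directly at the packet level, in the spirit of the proof of Theorem \ref{th-0}. I will sandwich the departure time $d(n)$ from above using the service curve, and the arrival time $a(n)$ from below using the arrival curve. Both bounds will be expressed through the pseudo-inverses, evaluated at a common level $v$ that is then optimized.

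\emph{Upper bound on $d(n)$.} By FIFO, packet $n$ has departed once $A^{*}$ reaches $L_n$. More precisely, $d(n)\le t$ whenever $A^{*}(t_{+})\ge L_n$. By the service curve property, $A^{*}(t)\ge \inf_{0\le s\le t}\{A(s)+\beta(t-s)\}$. Take $t=d(n)$ (up to a left limit). Since packet $n$ has not fully departed before $d(n)$, we have $A^{*}(d(n))<L_n$. Hence there is some $s_0\le d(n)$ with $A(s_0)+\beta(d(n)-s_0)<L_n$, or $\le$ after limits. Let $u=A(s_0)$, the data arrived strictly before $s_0$. Then $\beta(d(n)-s_0)\le L_n-u$, and property (P3) gives
$$d(n)-s_0\le \beta^{\uparrow}(L_n-u).$$

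\emph{Lower bound on $a(n)$.} Packets arriving in $[s_0,a(n)]$ include everything from cumulative level $u$ up to $L_n$. So $A(a(n)_{+})-A(s_0)\ge L_n-u$, and the arrival curve gives $\alpha(a(n)-s_0)\ge L_n-u$, up to right-limit care. Property (P2) then gives
$$a(n)-s_0\ge \alpha^{\downarrow}(L_n-u).$$
This is too weak, since it loses the $l(n)$ shift. I will refine it by using $\beta$ at level $v=L_n-u-l(n)$. If $A(s_0)+\beta(d(n)-s_0)<L_n$, I will instead evaluate the service inequality at the time packet $n$'s first bit would need to start, giving $\beta(d(n)-s_0)\le L_n-u-\dots$. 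Alternatively, pairing $\beta^{\uparrow}(v)$ with $\alpha^{\downarrow}(v+l(n))$ suggests defining $v=L_n-l(n)-u=L_{n-1}-u$. With this choice:
\begin{itemize}
\item The data in $[s_0,a(n)]$ is at least $v+l(n)$, so $a(n)-s_0\ge\alpha^{\downarrow}(v+l(n))$.
\item From $A^{*}(d(n))\le L_n$ together with $\beta(d(n)-s_0)\le L_n-u=v+l(n)$, I need to get down to $\beta(d(n)-s_0)\le v$ in order to use $\beta^{\uparrow}(v)$.
\end{itemize}

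The key step, and the main obstacle, is obtaining $\beta(d(n)-s_0)\le v$ rather than $\le v+l(n)$. I would handle it by applying the service curve at a time $t$ just before packet $n$'s departure. In the paper's cumulative-output convention, bits of packet $n$ count toward $A^{*}$ only upon its departure, since a packet departs when its last bit departs. So $A^{*}(t)\le L_{n-1}$ for all $t\le d(n)$. The infimum at $t=d(n)$ then yields $s_0$ with $A(s_0)+\beta(d(n)-s_0)\le L_{n-1}$, i.e.\ $\beta(d(n)-s_0)\le v$. Packet $n$ arrives after $s_0$: if $A(s_0)\ge L_n$ the inequality would be violated, since $\beta\ge0$. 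Hence $v\ge 0$ and the arrival-side bound above holds.

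Subtracting the two bounds,
$$D(n)=d(n)-a(n)\le \beta^{\uparrow}(v)-\alpha^{\downarrow}(v+l(n)),$$
and taking the supremum over $v\ge0$ gives (\ref{map-db3}). The remaining work is routine care with strict versus non-strict inequalities and left/right limits (infimum attainment, $A$ being left-continuous), handled with $\epsilon$-arguments as in the proof of Theorem \ref{th-0}.
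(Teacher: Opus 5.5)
Your final argument is correct and follows essentially the same route as the paper. FIFO with packet-level output gives $A^{*}(d(n))\le\sum_{k=1}^{n-1}l(k)$, and a point $s_0$ attaining the service-curve infimum at $t=d(n)$ (attainment the paper also takes for granted) gives $d(n)-s_0\le\beta^{\uparrow}(v)$ via (P3), while the arrival curve over $[s_0,a(n)_{+}]$ with (P2) gives $a(n)-s_0\ge\alpha^{\downarrow}(v+l(n))$. The only difference is that the paper first rounds $s_0$ up to the next arrival epoch $a(m)$ and uses $v=L(m,n)$, whereas you anchor at $s_0$ itself with $v=\sum_{k=1}^{n-1}l(k)-A(s_0)$; this coincides with $L(m,n)$ for packetized input and saves the index bookkeeping.
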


\begin{proof}\footnote{The proof can also be obtained by extending the results in \cite{Chang00}. In particular, step (\ref{eq-6}) extends Lemma 6.2.8 (ii) in \cite{Chang00}, by following the same proof but letting $t=\tau(n)_{+}$ there to obtain (\ref{eq-6}) with $l(n)$ included.  
Step (\ref{eq-7}) follows from Lemma 6.3.2 and Step (\ref{eq-9}) from Theorem 6.3.4 (i) in \cite{Chang00}.} 
Consider packet $n (\ge 1)$. To ease expression, define $L(n) \equiv \sum_{k=0}^{n-1} l(k)$ and  $L(m,n) \equiv \sum_{k=m}^{n-1} l(k)$, $\forall (0 \le) m \le n-1$, where packet 0 is virtual  with $l(0)=a(0)=d(0)=0$.

 First, focus on $a(n)$. For any $(0 \le) m (\le n-1)$, $A(a(m), a(n)_{+}) \ge \sum_{k=m}^{n}l(k)$ because other packets may also arrive at $a(n)$. In addition, from the definition of arrival curve, $A(a(m), a(n)_{+}) \le \alpha(a(n)_{+} -a(m))$. 
So,  
$$
\alpha(a(n)_{+} -a(m)) \ge \sum_{k=m}^{n}l(k) = L(m,n) + l(n).  
$$  
Then, from property (P2) of the lower pseudo-inverse, we get
$$
\alpha^{\downarrow}(L(m,n) + l(n) ) \le a(n)_{+} -a(m) 
$$
 hence for any $(0\le) m (\le n-1)$
\begin{equation}\label{eq-6}
a(n)_{+}  \ge a(m) + \alpha^{\downarrow}(L(m,n) + l(n)). 
\end{equation}   

Next, focus on $d(n)$. The definition of  service curve indicates that there exists some time $s$, $(0 \le s \le d(n))$, such that $A^{*}(d(n)) \ge A(s) + \beta(d(n) -s)$ \cite{NetCal}. In addition, by definition, $A^{*}(t)$ represents the amount of departure up to $t$ (excluded), so $A^{*}(d(n)) \le \sum_{k=0}^{n-1} l(k) = L(n)$. 
Hence,   
\begin{eqnarray}
L(n) &\ge& A(s) + \beta(d(n) -s). \nonumber
\end{eqnarray}
Let $m=\min\{k: a(k-1) < s\}$, which implies $a(m-1) < s \le a(m)$.
With $a(m-1) < s$, at least packet $m-1$ has arrived by $s$, so we must have $A(s) \ge  \sum_{k=0}^{m-1} l(k) = L(m)$. In addition, with $s \le a(m)$, $\beta(d(n) -s) \ge \beta(d(n) -a(m))$ because $\beta$ is non-decreasing. Together, $L(n) \ge L(m) + \beta(d(n) - a(m))$, or
$
\beta(d(n)-a(m)) \le L(m,n). 
$
With property (P3) of the upper pseudo-inverse, we obtain
$$
\beta^{\uparrow}(L(m,n)) \ge d(n) -a(m) 
$$
and hence, 
\begin{eqnarray}
d(n) &\le& a(m) + \beta^{\uparrow}(L(m,n)). \label{eq-7} 
\end{eqnarray} 

Finally, focus on $D(n)= d(n) - a(n)$. Since $a(n)_{+}=a(n) + \epsilon$, letting $\epsilon \to 0$ and combining (\ref{eq-6}) and (\ref{eq-7}), we get 
\begin{eqnarray}
D(n) &\le& \beta^{\uparrow}(L(m,n)) - \alpha^{\downarrow}(L(m,n) + l(n)) \nonumber \\ 
&\le& \sup_{v \ge 0} \{ \beta^{\uparrow}(v) - \alpha^{\downarrow}(v + l(n))\} \label{eq-9}
\end{eqnarray} 
which completes the proof. 
\end{proof}

Since $\alpha^{\downarrow}$ is non-decreasing, i.e. (P1), Corollary \ref{th-1-c1} follows immediately from Theorem \ref{th-1}.

\begin{corollary}\label{th-1-c1}
The delay $D(n)$ of any packet $n (\ge 1)$ and the maximum packet delay $D^{max}$ are both upper-bounded by 
\begin{eqnarray}
 \sup_{v \ge 0}\{
    \beta^{\uparrow}(v) - \alpha^{\downarrow}(v+l^{min} ) 
    \}  
\end{eqnarray}
where $l^{min}$ denotes the minimum packet length. 
\end{corollary}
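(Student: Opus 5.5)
The plan is to reduce Corollary \ref{th-1-c1} to Theorem \ref{th-1} using only a monotonicity argument in the packet length. Fix any packet $n \ge 1$. Theorem \ref{th-1} gives
$$
D(n) \le \sup_{v \ge 0}\{\beta^{\uparrow}(v) - \alpha^{\downarrow}(v+l(n))\}.
$$
What remains is to show that the right-hand side is no larger than the same expression with $l(n)$ replaced by $l^{min}$.

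Since $l^{min}$ is the minimum packet length, $l(n) \ge l^{min}$, so $v + l(n) \ge v + l^{min}$ for every $v \ge 0$. By property (P1), $\alpha^{\downarrow}$ is wide-sense increasing, and therefore
$$
\alpha^{\downarrow}(v+l(n)) \ge \alpha^{\downarrow}(v+l^{min})
\quad\text{for every } v \ge 0.
$$
Subtracting from $\beta^{\uparrow}(v)$ reverses this inequality term by term, so
$$
\beta^{\uparrow}(v) - \alpha^{\downarrow}(v+l(n)) \le \beta^{\uparrow}(v) - \alpha^{\downarrow}(v+l^{min})
\quad\text{for every } v \ge 0.
$$
Taking the supremum over $v \ge 0$ on both sides preserves the inequality. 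This gives the bound on $D(n)$ for every $n \ge 1$.

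For $D^{max}$, I note that the bound just obtained does not depend on $n$. Hence it also bounds $\max_{n\ge 1} D(n) = D^{max}$ by (\ref{eq-dmax}), or the supremum if the maximum is not attained.

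There is no real obstacle here. The only point needing care is that the expressions involve pseudo-inverses, which may take the value $+\infty$. The term-wise comparison still holds in the extended reals, provided we use the convention that a supremum of terms each bounded by the corresponding term of another family is bounded by that family's supremum.
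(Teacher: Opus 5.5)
Your proposal is correct and follows the same route as the paper. You apply Theorem~\ref{th-1} to packet $n$, use $l(n)\ge l^{min}$ with the monotonicity of $\alpha^{\downarrow}$ from (P1) to replace $l(n)$ by $l^{min}$ inside the supremum, and then observe that the resulting $n$-independent bound also bounds $D^{max}$ (your extended-real remark is extra care the paper does not spell out).
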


\subsection{Exemplification}\label{sec-4-2}

To demonstrate the potential improvement of the new bound over the classical bound, we consider in the following that the arrival curves are of the leaky-bucket type \cite{NetCal} and the service curves are of the rate-latency type \cite{NetCal}. 

\begin{proposition}\label{t1-p1}
    For a FIFO system, suppose the input has leaky-bucket arrival curve $\alpha(t) = \rho t + \sigma$, and the service has rate-latency service curve $\beta(t) = R(t-T)^+$, where $ (x)^{+}\equiv \max \{x, 0\}$. If $\rho \le R$, the delay $D(n)$ of any packet $n (\ge 1)$ satisfies:  
\begin{equation}\label{eq-new1}
D(n) \le \frac{\sigma}{R} + T - \frac{l(n)}{R} 
 \end{equation}
\end{proposition}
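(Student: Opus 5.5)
The plan is to specialize Theorem~\ref{th-1} directly: compute the two pseudo-inverses for the given curves in closed form and then maximize the resulting one-variable function of $v$ explicitly.

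\textbf{Step 1: the pseudo-inverses.} For the rate-latency curve $\beta(t)=R(t-T)^+$ and any $v\ge 0$, the condition $R(x-T)^+>v$ holds exactly when $x>T+v/R$. Hence
\[
\beta^{\uparrow}(v)=T+\frac{v}{R}.
\]
For the leaky-bucket curve, $\rho x+\sigma\ge y$ holds exactly when $x\ge (y-\sigma)/\rho$. Hence
\[
\alpha^{\downarrow}(y)=\frac{(y-\sigma)^+}{\rho}\quad\text{for } \rho>0 .
\]
This formula is the same whether one takes $\alpha(0)=\sigma$ or uses the convention $\alpha(0)=0$ with a jump at the origin, since only arguments $y=v+l(n)>0$ matter. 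The degenerate case $\rho=0$ needs separate treatment. There $\alpha^{\downarrow}(y)=0$ for $y\le\sigma$ and $+\infty$ for $y>\sigma$, so all $v>\sigma-l(n)$ contribute $-\infty$ and drop out of the supremum.

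\textbf{Step 2: the supremum.} Write $l=l(n)$ and define
\[
f(v)=T+\frac{v}{R}-\frac{(v+l-\sigma)^+}{\rho}.
\]
Theorem~\ref{th-1} then says $D(n)\le\sup_{v\ge 0}f(v)$. The function $f$ is piecewise linear in $v$. On the range $0\le v\le \sigma-l$, it has slope $1/R>0$. On the range $v\ge\sigma-l$, it has slope $1/R-1/\rho\le 0$, using the hypothesis $\rho\le R$; this is exactly where that hypothesis enters.

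Suppose first $\sigma\ge l$. Then the kink $v=\sigma-l$ lies in $[0,\infty)$ and is the maximizer, giving $\sup_v f=T+(\sigma-l)/R$. This is exactly (\ref{eq-new1}).

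Suppose instead $\sigma<l$. Then $f$ is non-increasing on all of $[0,\infty)$, so the supremum is $f(0)=T-(l-\sigma)/\rho$. Because $\rho\le R$, this is at most $T-(l-\sigma)/R=\sigma/R+T-l/R$. So (\ref{eq-new1}) still holds.

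One could also remark that this second case cannot actually occur. A packet of length $l(n)$ arrives at a single instant, so the arrival curve definition evaluated at $a(n)_+$ forces $\sigma\ge l(n)$. Including the case anyway keeps the argument self-contained.

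\textbf{Main obstacle.} There is no deep difficulty here. The only points needing care are the exact values of the pseudo-inverses, namely the strict versus non-strict inequality in $\beta^{\uparrow}$ and the positive part in $\alpha^{\downarrow}$, together with the sign bookkeeping for the slope after the kink. I would verify those carefully, along with the degenerate case $\rho=0$, where the supremum is again attained at $v=\sigma-l$.
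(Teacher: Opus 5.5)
Your proposal is correct and follows the paper's proof: specialize Theorem~\ref{th-1} with $\beta^{\uparrow}(v)=v/R+T$ and $\alpha^{\downarrow}(v+l(n))=\bigl((v+l(n)-\sigma)/\rho\bigr)^{+}$, then maximize the resulting piecewise-linear function, which peaks at $v=\sigma-l(n)$ because $\rho\le R$ makes the slope non-positive afterwards. You also carry out the maximization the paper only asserts (``it can be shown''), and you handle two cases the paper skips: $\sigma<l(n)$, where the supremum is $T-(l(n)-\sigma)/\rho$ and is merely bounded by the claimed value, and $\rho=0$, where the paper's formula divides by zero.
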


\begin{proof}
Consider any packet $n (\ge 1)$ and apply $\alpha(t) = \rho t + \sigma$ and $\beta(t) = R(t-T)^+$ to Theorem \ref{th-1}. The lower inverse function $\alpha^{\downarrow}(v+l(n))$ of  $\alpha(t)=\rho t + \sigma$ is: 
\begin{eqnarray}
        \alpha^{\downarrow} (v + l(n)) &=& \inf\{t \ge 0: \rho t + \sigma \ge v + l(n) \} \nonumber\\
        &=&  \left(\frac{v + l(n) -\sigma}{\rho} \right)^{+} \nonumber
\end{eqnarray}
and the upper inverse function $\beta^{\uparrow}$ of $\beta(t) = R(t-T)^+$ is $\beta^{\uparrow}(v) = \inf\{t (\ge 0):  R(t-T)^+ > v \}= \frac{v}{R}+T$. 
Hence
\[
   \beta^{\uparrow}(v) - \alpha^{\downarrow}(v+l(n))  = 
\begin{cases}
    \frac{v}{R} + T - \frac{v+l(n)-\sigma}{\rho}, & \text{if } v  + l(n)\geq \sigma\\
    \frac{v}{R}+T,              & \text{if }  v  + l(n) < \sigma
\end{cases}
\]
It can be shown that the supremum of $ \beta^{\uparrow}(v) - \alpha^{\downarrow}(v+l(n))$ is $\frac{\sigma - l(n)}{R}+T$, which completes the proof. 
\end{proof}

From Proposition \ref{t1-p1}, since  $\alpha^{\downarrow}$ is non-decreasing, Corollary \ref{t1-p1-c1} can be verified. 

\begin{corollary}\label{t1-p1-c1}
For the system considered in Proposition \ref{t1-p1}, the delay $D(n)$ of any packet $n (\ge 1)$ and the maximum packet delay $D^{max}$ are both upper-bounded by 
\begin{equation}\label{cmp-new1}
\frac{\sigma}{R} + T - \frac{l^{min}}{R} \equiv \Delta^{New}. 
 \end{equation}
\end{corollary}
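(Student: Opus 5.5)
The plan is to derive Corollary \ref{t1-p1-c1} directly from Proposition \ref{t1-p1}, in two steps: first bound the delay of each individual packet uniformly, then pass to the maximum.

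\emph{Per-packet bound.} Fix any packet $n \ge 1$. By Proposition \ref{t1-p1}, under the standing assumption $\rho \le R$,
$$
D(n) \le \frac{\sigma}{R} + T - \frac{l(n)}{R}.
$$
The right-hand side is non-increasing in $l(n)$, because $R>0$. Since $l^{min}$ is the minimum packet length, $l(n) \ge l^{min}$. Therefore $-l(n)/R \le -l^{min}/R$, which gives $D(n) \le \Delta^{New}$.

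This is the same monotonicity argument the paper uses for Corollary \ref{th-1-c1}. There one could alternatively start from Theorem \ref{th-1} and use (P1): $\alpha^{\downarrow}(v+l(n)) \ge \alpha^{\downarrow}(v+l^{min})$ for every $v \ge 0$, so the supremum is bounded by the one with $l^{min}$. One would then evaluate that supremum exactly as in the proof of Proposition \ref{t1-p1}, with $l(n)$ replaced by $l^{min}$. Either route gives the same constant; the direct route through the closed form in (\ref{eq-new1}) is shorter.

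\emph{Maximum packet delay.} The bound $\Delta^{New}$ does not depend on $n$. Taking the supremum over $n \ge 1$ in definition (\ref{eq-dmax}) therefore yields $D^{max} \le \Delta^{New}$.

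\emph{Main obstacle.} There is essentially no obstacle beyond one point of care: the minimum packet length $l^{min}$ must be well defined, with $l(n) \ge l^{min} \ge 0$ for all $n$. Given that, the constant bounding every $D(n)$ automatically bounds the maximum packet delay $D^{max}$.
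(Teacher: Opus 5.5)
Your proof is correct and is essentially the paper's argument. The paper also gets the corollary directly from Proposition~\ref{t1-p1} by monotonicity in the packet length; it phrases this through $\alpha^{\downarrow}$ being non-decreasing, which is your alternative route via Theorem~\ref{th-1} and (P1), whereas your use of $\frac{\sigma}{R}+T-\frac{l(n)}{R}$ with $l(n)\ge l^{min}$ and $R>0$ states the same step explicitly, and the passage to $D^{max}$ is identical since the constant does not depend on $n$.
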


As a comparison, for the system considered in Proposition \ref{t1-p1}, the classical delay bound from virtual delay analysis becomes 
\begin{equation}\label{cmp-nc}
    \frac{\sigma}{R}+T \equiv \Delta
\end{equation}
with which, we can rewrite $\Delta^{New}$ as 
$$\Delta^{New} = \Delta - \frac{l^{min}}{R}.$$
Clearly the new bound $\Delta^{New}$ is tighter, since in real networks, the length of a packet is always lower-bounded e.g. by the header size. This leads to the following remark. 

\vspace{6pt}
\noindent \textbf{Remark 1:} {\em The improvement of $\Delta^{New}$ over $\Delta$ is strictly positive in practical systems.}

\section{Case Study} \label{sec-5}
In this section, we conduct a case study to compare the delay bounds quantitatively. It is based on an example configuration for a time-aware Talker in time-sensitive networking (TSN), as described in Annex U: TSN Configuration Examples, \cite{8021Q}. 

\subsection{The Case}
The transmission rate of the Talker is $c=100$ Mb/s. The Talker has two queues, a TSN queue and a non-TSN queue, where the TSN queue has non-preemptive strict priority over the non-TSN queue. When the TSN queue is enabled, a non-TSN frame can interfere and this interference is up to $\tau_{interference} = 123.36$ $\mu s$, assuming maximum frame size 1522 bytes for the non-TSN queue. 

The Talker supports two TSN streams, Stream J and Stream K. Both generate frames periodically. For Stream J, the frame size is $l_J = 1500$ bytes and the interval is 500 $\mu s$. For Stream K, the frame size is $l_K = 1000$ bytes, and the interval is also 500 $\mu s$. The two streams form the input to the TSN queue, and hence for the TSN queue, $l^{max} = 1500$ bytes and $l^{min} = 1000$ bytes. 
In addition, $l_{J}^{max} = l_{J}^{min}= 1500$ bytes and $l_{K}^{max} = l_{K}^{min}= 1000$ bytes.

For this case, it can be verified that Stream J has leaky-bucket arrival curve $\alpha_J(t) = \rho_J t + \sigma_J$ with $\rho_J= 24$ Mb/s and $\sigma_J=l_J=1500$ bytes, and Stream K has arrival curve $\alpha_K(t) = \rho_K t + \sigma_K$ with $\rho_K= 16$ Mb/s and $\sigma_K=l_K = 1000$ bytes. The aggregate input to the TSN queue has arrival curve  $\alpha(t) = \rho t + \sigma$ with $\rho = \rho_J + \rho_K = 40$ Mb/s and $\sigma=l_J+l_K=2500$ bytes. In addition, the system provides to the TSN-queue latency-rate service curve $\beta(t) = R(t-T)^{+}$ with $R=c= 100$ Mb/s and $T=\frac{l^{max}}{c} + \tau_{interference} = 243.36$ $\mu s$ (e.g. see Sec. 2.4.2, \cite{Chang00}). With these, the classical bound $\Delta$ and the new bound $\Delta^{New}$ are readily calculated. 

\subsection{Related Delay Bounds}

Note that the case description contains more specific  information than $\alpha$ and $\beta$. This information can be exploited in delay bound analysis. 
For instance, Proposition \ref{t1-p1} from Theorem \ref{th-1} can be generalized to Corollary \ref{t1-p2} by exploiting the information of each individual flow, whose proof follows from focusing only on packets of the considered flow in (\ref{eq-new1}). 

\begin{corollary}\label{t1-p2}
    If the input consists of multiple flows, each flow $i$ has arrival curve $\alpha_i=\rho_i t + \sigma_i$ and $\sum_i \rho_i \le R$, the delay $D(n_i)$ of any packet $n_i$ in flow $i$ satisfies: 
\begin{eqnarray}
  D(n_i) &\le& \frac{\sum_i \sigma_i}{R} + T - \frac{l(n_i)}{R} \nonumber   
\end{eqnarray}
    and the maximum packet delay of flow $i$ is upper-bounded by 
\begin{equation}\label{cmp-new2}
\Delta - \frac{l_i^{min}}{R} \equiv \Delta_i^{NEW}. 
 \end{equation}
 where $l_i^{min}$ denotes the minimum packet length of flow $i$. 
\end{corollary}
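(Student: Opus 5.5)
The idea is to reduce Corollary \ref{t1-p2} to Proposition \ref{t1-p1} by treating the aggregate input and reading the bound off for a packet of flow $i$. The FIFO system serves the aggregate of all flows. Leaky-bucket arrival curves add, so the aggregate input $A=\sum_i A_i$ satisfies $A(t)-A(s)\le \sum_i(\rho_i(t-s)+\sigma_i)$. The aggregate therefore has arrival curve $\alpha(t)=\rho t+\sigma$ with $\rho=\sum_i\rho_i$ and $\sigma=\sum_i\sigma_i$. By hypothesis the system provides the aggregate with the rate-latency service curve $\beta(t)=R(t-T)^+$, and $\rho\le R$ holds by assumption. All the hypotheses of Proposition \ref{t1-p1} are thus met for the aggregate process.

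Next I apply Proposition \ref{t1-p1}, or equivalently Theorem \ref{th-1}, to the packet $n_i$, viewed as a packet $n$ of the aggregate sequence with $l(n)=l(n_i)$. This gives directly
\[
D(n_i)\le \frac{\sum_i\sigma_i}{R}+T-\frac{l(n_i)}{R}.
\]
The key observation is that Theorem \ref{th-1} bounds the delay of each individual packet in terms of its own length $l(n)$. The lengths of the preceding packets enter only through $L(m,n)$, and that term is absorbed by the supremum over $v$. So restricting attention to packets of flow $i$ costs nothing: the inequality holds packet by packet.

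Finally, I take the maximum over packets of flow $i$. The right-hand side is decreasing in $l(n_i)$ and $l(n_i)\ge l_i^{min}$, so every packet of flow $i$ has delay at most $\sigma/R+T-l_i^{min}/R=\Delta-l_i^{min}/R=\Delta_i^{NEW}$, with $\Delta$ computed for the aggregate $\sigma=\sum_i\sigma_i$.

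The main point needing care is the first step. One must check that the service curve guarantee and the FIFO property refer to the aggregate, and that the indexing in Theorem \ref{th-1}'s proof still works when $n$ is an arbitrary aggregate packet belonging to flow $i$. It does, since that proof never used flow identity. The supremum computation itself was already carried out in Proposition \ref{t1-p1}, so the only remaining work is the routine subadditivity of the summed arrival curves.
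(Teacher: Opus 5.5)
Your proposal is correct and follows the paper's route: sum the leaky-bucket curves into an aggregate curve $\rho t+\sigma$ with $\sigma=\sum_i\sigma_i$ and $\rho=\sum_i\rho_i\le R$, apply the per-packet bound (\ref{eq-new1}) of Proposition \ref{t1-p1} to the packets of flow $i$, then use $l(n_i)\ge l_i^{min}$. One small wording point: the step you call ``subadditivity'' is really just adding the per-flow constraints $A_i(t)-A_i(s)\le\rho_i(t-s)+\sigma_i$, and that is all the argument needs.
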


In addition, the description contains information that enables the application of delay bounds from \cite{Ehsan19} \cite{Ehsan23}. Specifically it is proved in \cite{Ehsan19} that, for a FIFO element with rate-latency service curve $R(t-T)^+$, if the transmission rate $c (\ge R)$ is known, the maximum packet delay of any flow $i$ is upper-bounded by: 
\begin{equation} \label{cmp-ehsan19}
    \Delta- l_i^{min}(\frac{1}{R}-\frac{1}{c}) \equiv \Delta^{\text{\cite{Ehsan19}}}_i. 
\end{equation}
If the TSN traffic specification of each flow is also known, it is further proved in \cite{Ehsan23} that the maximum packet delay of any flow $i$ is upper-bounded by:  
\begin{equation}\label{cmp-ehsan23}
    \Delta- l_i^{max}(\frac{1}{R}-\frac{1}{c}) \equiv \Delta^{\text{\cite{Ehsan23}}}_i
\end{equation}
where $l_i^{max}$ denotes the maximum packet length of flow $i$. 

Comparing (\ref{cmp-new2}) with (\ref{cmp-ehsan19}) and (\ref{cmp-ehsan23}), it is clear that $\Delta_i^{NEW} = \Delta_i^{\text{\cite{Ehsan19}}} - \frac{l_i^{min}}{c} $ and $\Delta_i^{NEW} = \Delta_i^{\text{\cite{Ehsan23}}} - \frac{l_i^{min}}{R} + l_i^{max}(\frac{1}{R}-\frac{1}{c})$. They imply that $\Delta_i^{NEW}$ is always tighter than $\Delta_i^{\text{\cite{Ehsan19}}}$ but its tightness in comparison with $\Delta_i^{\text{\cite{Ehsan23}}}$ is influenced by $l_i^{min}$, $l_i^{max}$, $R$ and $c$. In addition, comparing (\ref{cmp-new2}) with (\ref{cmp-new1}), it is clear $\Delta_i^{NEW} = \Delta^{New} - \frac{l_i^{min}-l^{min}}{R}$. Since $l^{min}=\min_i \{ l_i^{min}\} \le l_i^{min}, \forall i$, we always have $\Delta_i^{NEW} \le \Delta^{New}$. It is worth highlighting that $\Delta_i^{NEW}$ does not rely on the additional information, e.g. $c$, required for  $\Delta_i^{\text{\cite{Ehsan19}}}$ and $\Delta_i^{\text{\cite{Ehsan23}}}$.  
In addition, $\Delta^{New}$ does not rely on the  information of individual flows required for $\Delta_i^{NEW}$. To give a more direct impression, the bounds are compared quantitatively for the case considered. 

\subsection{Quantitative Comparison} 

Table \ref{tb-cmp} summarizes and compares the delay bounds for the considered case, which are arranged in two categories. One contains the new bound $\Delta^{New}$ and the classical bound $\Delta$ that do not require or make use of traffic information of the constituent flows. The other category contains the new bound $\Delta_i^{NEW}$ and the bound $\Delta_i^{\text{\cite{Ehsan19}}}$ from \cite{Ehsan19} and the bound $\Delta_i^{\text{\cite{Ehsan23}}}$ from \cite{Ehsan23}, which make use of information of the constituent flows. 

\begin{table}[th!]
\caption{Comparison of delay bounds (in $\mu s$) for Streams J and K} \label{tb-cmp}
\centering
%\begin{center}
\begin{tabular}{|l|c|c||c|c|c|}
\hline
Stream & $\Delta^{New}$ &  $\Delta$ &  $\Delta_i^{NEW}$ & $\Delta_i^{\text{\cite{Ehsan19}}}$ & $\Delta_i^{\text{\cite{Ehsan23}}}$\\ \hline\hline  
 J & 363.36 & 443.36   & 323.36 & 443.36  & 443.36  \\ \hline 
 K & 363.36 & 443.36   & 363.36 & 443.36  & 443.36 \\ \hline 
\end{tabular}%
%\end{center}
\end{table}

As shown in Table \ref{tb-cmp}, the two new delay bounds $\Delta^{New}$ and $\Delta_i^{NEW}$ are more than $25\%$ tighter in their categories. 
In particular, Table \ref{tb-cmp} demonstrates that the classical bound $\Delta$ can be significantly conservative, which confirms the finding in Theorem \ref{th-0}.
In addition, it shows that with information about individual flows, $\Delta_i^{NEW}$ can improve $\Delta^{New}$ particularly for Stream J. Furthermore, $\Delta_i^{NEW}$ also exhibits significant improvement over $\Delta_i^{\text{\cite{Ehsan19}}}$ and $\Delta_i^{\text{\cite{Ehsan23}}}$ for the considered case. 

Note that both streams have the frame generation interval 500 $\mu s$. This implies that within each such interval, only two frames are generated, one from each stream. In addition, all bounds less than 500 $\mu s$ imply that the two frames finish before the next such frames are generated. With these, the worst case is that both frames arrive at the same time just before the transmission of a non-TSN frame, and the worst-case delay becomes  
$\frac{l_J}{c}+\frac{l_K}{c}+T_{interference} = 323.36$ $\mu s$. 
Comparing with Table \ref{tb-cmp}, while all bounds are valid, most are not tight. Recall that $\Delta$, $\Delta^{\text{\cite{Ehsan19}}}$ and $\Delta^{\text{\cite{Ehsan23}}}$ have been shown to be tight under their conditions. We hence remark the following. 

\vspace{6pt}
\noindent \textbf{Remark 2:} {\em A delay bound that is tight under certain conditions may not be tight if additional information is available.}

\section{Conclusion}\label{sec-6}
We revisited delay bound analysis in network calculus and showed that for a FIFO system, the maximum packet delay is upper-bounded by the maximum virtual delay.
Building on this insight, we developed an analysis that directly targets packet delay without relying on virtual delay and derived a new packet delay bound that requires no additional assumption beyond arrival and service curves. 
We also specialized the new bound for a system with leaky-bucket arrival curve and rate-latency service curve, showing that the new bound is strictly tighter. 
A TSN case study demonstrated that the new bound can produce a reduction of more than 25\% compared to the classical virtual-delay-based bound. In addition, the case study indicates that the tightness of a delay bound depends on the available information and cannot be universally generalized.

\bibliographystyle{IEEEtran}
%\bibliography{nc-qt}

\begin{thebibliography}{1}

\bibitem{Chang00}
C.-S. Chang, \emph{Performance Guarantees in Communication Networks}. 
  Springer-Verlag, 2000.

\bibitem{NetCal}
J.-Y. {Le Boudec} and P.~Thiran, \emph{Network Calculus: A Theory of Deterministic Queueing Systems for the Internet}.  
  Springer-Verlag, 2001.

\bibitem{DNC}
A.~Bouillard, M.~Boyer, and E.~{Le Corronc}, \emph{Deterministic Network
  Calculus: From Theory to Practical Implementation}.\hskip 1em plus 0.5em
  minus 0.4em\relax Wiley-ISTE, 2018.

\bibitem{Zhao22}
L.~Zhao, P.~Pop, and S.~Steinhorst, ``Quantitative performance comparison of
  various traffic shapers in time-sensitive networking,'' \emph{IEEE
  Trans. Network and Service Management}, vol.~19, no.~3, pp.
  2899--2928, 2022.

\bibitem{Ehsan19}
E.~Mohammadpour, E.~Stai, and J.-Y. Le~Boudec, ``Improved delay bound for a
  service curve element with known transmission rate,'' \emph{IEEE Networking
  Letters}, vol.~1, no.~4, pp. 156--159, 2019.

\bibitem{Ehsan23}
------, ``Improved network-calculus nodal delay-bounds in time-sensitive
  networks,'' \emph{IEEE/ACM Trans. Networking}, vol.~31, %no.~6, 
  pp. 2902--2917, 2023.

\bibitem{JL17}
J.~{Liebeherr}, ``Duality of the max-plus and min-plus network calculus,''
  \emph{Foundations and Trends in Networking}, vol.~11, pp. 139--282,
  2017. %no. 3-4, 

\bibitem{8021Q}
``{IEEE} standard for local and metropolitan area networks--bridges and bridged
  networks,'' \emph{IEEE Std 802.1Q-2022 (Revision of IEEE Std 802.1Q-2018)},
  pp. 1--2163, 2022.

\end{thebibliography}

\vfill

\end{document}